\documentclass{article}
\setlength{\pdfpagewidth}{8.5in}
\setlength{\pdfpageheight}{11in}

\usepackage{hyperref}

\usepackage[nompar,nosign]{commenting}

\pdfinfo{
/Title (ExistsGuaranteeNash for Boolean Games is NEXP-hard)
/Author (Egor Ianovski, Luke Ong)
/Keywords (Boolean games, Complexity theory)
}

\title{$\exists$\GN\ for Boolean Games is NEXP-hard}
\author{Egor Ianovski \and Luke Ong\\
Department of Computer Science, University of Oxford\\
Wolfson Building, Parks Road, Oxford, UK\\}

\usepackage{amsmath, amsthm, amssymb}

\newtheorem{theorem}{Theorem}
\newtheorem{proposition}{Proposition}
\newtheorem{lemma}{Lemma}
\newtheorem{corollary}{Corollary}

\theoremstyle{definition}
\newtheorem{definition}{Definition}
\newtheorem{example}{Example}

\newcommand\st{\, . \,}

\newcommand\Zero{\mathit{Zero}}
\newcommand\pZero{\mathit{pZero}}
\newcommand\sZero{\mathit{sZero}}
\newcommand\nZero{\mathit{nZero}}
\newcommand\npZero{\mathit{npZero}}
\newcommand\nsZero{\mathit{nsZero}}
\newcommand\One{\mathit{One}}
\newcommand\pOne{\mathit{pOne}}
\newcommand\sOne{\mathit{sOne}}
\newcommand\nOne{\mathit{nOne}}
\newcommand\npOne{\mathit{npOne}}
\newcommand\nsOne{\mathit{nsOne}}
\newcommand\Head{\mathit{Head}}
\newcommand\pHead{\mathit{pHead}}
\newcommand\nHead{\mathit{nHead}}
\newcommand\sHead{\mathit{sHead}}
\newcommand\npHead{\mathit{npHead}}
\newcommand\NoHead{\mathit{NoHead}}
\newcommand\Left{\mathit{Left}}
\newcommand\sLeft{\mathit{sLeft}}

\newcommand\Right{\mathit{Right}}
\newcommand\sRight{\mathit{sRight}}

\newcommand\Time{\mathit{Time}}
\newcommand\nTime{\mathit{nTime}}
\newcommand\State{\mathit{State}}
\newcommand\pState{\mathit{pState}}

\newcommand\npState{\mathit{npState}}
\renewcommand\Tape{\mathit{Tape}}
\newcommand\sTape{\mathit{sTape}}
\newcommand\nTape{\mathit{nTape}}
\newcommand\pTape{\mathit{pTape}}
\newcommand\npTape{\mathit{npTape}}
\newcommand\nsTape{\mathit{nsTape}}
\newcommand\Init{\mathit{Init}}
\newcommand\Final{\mathit{Final}}
\newcommand\Cons{\mathit{Cons}}
\newcommand\Consequent{\mathit{Consequent}}
\newcommand\InitState{\mathit{InitState}}
\newcommand\InitTape{\mathit{InitTape}}
\newcommand\InitHead{\mathit{InitHead}}
\newcommand\BlankCells{\mathit{BlankCells}}

\newcommand\Succ{\mathit{Succ}}
\newcommand\MatchOne{\mathit{MatchOne}}
\newcommand\AgreeTime{\mathit{AgreeTime}}
\newcommand\AgreeCell{\mathit{AgreeCell}}
\newcommand\sAgreeCell{\mathit{sAgreeCell}}
\newcommand\AgreeHead{\mathit{AgreeHead}}
\newcommand\sAgreeHead{\mathit{sAgreeHead}}
\newcommand\OneOf{{\boldsymbol{\mathit{OneOf}}}}
\newcommand\SUCC{{\boldsymbol{\mathit{Succ}}}}
\renewcommand\Succ{\mathit{Succ}}
\newcommand\Triple{\mathit{Triple}}
\newcommand\Rule{\mathit{Rule}}
\newcommand\Rules{\mathit{Rules}}
\newcommand\Centre{\mathit{Centre}}

\newcommand{\ria}{\rightarrow}
\newcommand{\RR}{\mathbb{R}}

\newcommand{\lra}{\leftrightarrow}

\newcommand{\bs}[1]{\boldsymbol{#1}}

\newcommand{\GN}{{\sc GuaranteeNash}}

\begin{document}

\maketitle
  \begin{abstract}Boolean games are an expressive and natural formalism
    through which to investigate problems of strategic interaction
    in multiagent systems. Although they have been widely studied,
    almost all previous work on Nash equilibria in Boolean games
    has focused on the restricted setting of pure strategies.
    This is a shortcoming as finite games are guaranteed to have
    at least one equilibrium in mixed strategies, but many simple games fail
    to have pure strategy equilibria at all.
    We address this by showing that a natural decision problem about mixed
    equilibria: determining whether
    a Boolean game has a mixed strategy equilibrium that guarantees every
    player a given payoff, is NEXP-hard. Accordingly, the
    $\epsilon$ variety of the problem is NEXP-complete.
    The proof can be adapted to show coNEXP-hardness of a similar question: whether all Nash equilibria
    of a Boolean game guarantee every player at least the given
    payoff.
  \end{abstract}

  \section{Introduction}
  
      A multiagent environment makes strategic considerations
      inevitable. Any attempt to explain the behaviour of a system
      consisting of self-interested agents cannot ignore the fact
      that agents' behaviour may be influenced or completely determined
      by the behaviour of other agents in the system. As the field of game theory
      concerns itself with precisely these issues, its concepts find fertile
      ground in the study of multiagent systems.
      
      A shortcoming of game theoretical techniques is that games,
      being combinatorial objects, are liable to get very large
      very quickly. Any computational application of game theory
      would need alternative representations to the normal and
      extensive forms prominent in the economics literature.
      One such representation, based on propositional logic,
      is the Boolean game.
  
      Boolean games were initially introduced as two player
      games which have an algebra isomorphic to the Lindenbaum algebra
      for propositional logic \cite{Harrenstein2001}. Since then Boolean
      games have garnered interest from the multiagent community as a
      simple yet expressive framework to model strategic interaction.
      This has led to the study of complexity issues involved in reasoning about
      these games. While many questions have been answered, the issue
      of mixed strategies remained open.
      
      In this paper we address this lacuna and present the first
      complexity result about mixed equilibria in the Boolean games
      literature: the NEXP-hardness of $\exists$\GN, which asks whether
      a Boolean game has an equilibrium where each player attains
      at least $\bs{v}[i]$ utility, for some input vector $\bs{v}$.
  
    \subsection{Related Work}
    
      Complexity results for Boolean games were first studied in the
      two player case by \cite{Dunne2004} and the $n$-player
      case by \cite{Bonzon2006}, where among other results the authors showed that
      determining the existence of a pure equilibrium is $\Sigma^p_2$-complete in
      the general case, and can be easier should some restrictions be placed
      on the goal formulae of the players. Further enquiry into tractable
      fragments of Boolean games was carried out by \cite{Dunne2012}.
      
      Cardinal extensions to Boolean games were considered as weighted Boolean formula
      games \cite{Mavronicolas2007} and satisfiability games \cite{Bilo2007}, where
      the authors exploited the connection of a subclass of these games to congestion games
      \cite{Rosenthal1973} to obtain complexity results about both mixed and pure equilibria.
    
      A very similar framework to Boolean games is that of Boolean circuit games \cite{Schoenebeck2012}.
      There players are equipped with a Boolean circuit with $k$ input gates and
      $m$ output gates. The input gates are partitioned among the players, and
      a player's strategy is an assignment of values to the gates under his control.
      The output gates encode a binary representation of the player's utility on
      a given input. The authors explore a number of questions about both mixed and
      pure equilibria, including the complexity of $\exists$\GN.
      
      Note that a Boolean game can be seen as a very specific type of Boolean circuit
      game: the players' circuits are restricted to
      $\text{NC}^1$, and the number of output gates to one. Thus easiness
      results for Boolean circuit games directly transfer to Boolean games, and
      hardness results transfer in the other direction. In particular, this means
      that the NEXP-completeness of $\exists$\GN\ for Boolean circuit games proved by \cite{Schoenebeck2012}
      does not imply the result of this paper.
  
  \section{Preliminaries}
  
    While there are many breeds of games in the literature, we here
    restrict ourselves to what is perhaps the most widely used class:
  
    \begin{definition}
      A \emph{finite strategic game} consists of $n$ players,
      each equipped with a finite set of pure strategies, $S_i$,
      and a utility function $u_i:S_1\times\dots\times S_n\ria\RR$.
      
      An $n$-tuple of strategies is called a \emph{strategy profile}: thus
      a utility function maps strategy profiles to the reals.
    \end{definition}
    \begin{example}
      In a game of \emph{matching pennies} two players are given a coin
      each and may choose to display that coin heads or tails up.
      Player Two seeks to match the move of Player One, while player
      one seeks to avoid that. Hence we have $u_2(HH)=u_2(TT)=1$,
      $u_1(HT)=u_1(TH)=1$, and 0 otherwise.
    \end{example}
    
    Note that to represent a finite strategic game explicitly (the 
    \emph{normal form} of the game) we would need to list the players'
    utility on every possible profile. This would require on the order
    of $n|S_i|^n$ entries, taking $S_i$ to mean the size of the ``typical''
    strategy set. Such a representation is both exponential in the number
    of players and linear in the number of strategies: which in itself
    may be very large.
    
    Ideally we would wish to avoid such a representation. If a game has
    some internal structure, it would be natural to ask if the game
    can be described in a more succinct way. In the case where the
    game can be interpreted as players holding propositional preferences
    over Boolean variables the Boolean game offers precisely that.
    
    \begin{definition}
      A \emph{Boolean game} is a representation of a finite strategic
      game given by $n$ disjoint sets of propositional variables, $\Phi_i$,
      and $n$ formulae of propositional logic, $\gamma_i$.
      
      The intended interpretation is that player $i$ controls the
      variables in $\Phi_i$ in an attempt to satisfy $\gamma_i$,
      which may depend on variables not in player $i$'s control.
      The set of pure strategies of player $i$ is then $2^{\Phi_i}$,
      and his utility function is $\nu\mapsto 1$ if $\nu\vDash\gamma_i$,
      and $\nu\mapsto 0$ otherwise.
    \end{definition}
    \begin{example}
      Matching pennies can be given a Boolean representation by setting 
      $\Phi_1=\{p\}$, $\Phi_2=\{q\}$, $\gamma_1=\neg(p\lra q)$ and
      $\gamma_2=p\lra q$.
    \end{example}
    
    The size of a Boolean game is thus on the order of $n(|\Phi_i|+|\gamma_i|)$.
    In the best case $\gamma_i$ is small and the resulting representation is
    linear in the number of players and logarithmic in the number of strategies,
    giving greater succinctness on both fronts.
    
    Having defined the game representation, we now turn to reasoning about
    such games. The most common solution concept is the Nash equilibrium,
    which we define below.
    
    \begin{definition}
      Given a strategy profile $\bs{s}$, we use $\bs{s}_{-i}(\sigma_i')$ to mean
      the profile obtained by replacing the strategy of $i$ in $\bs{s}$ with 
      $\sigma_i'$. A \emph{best response} for $i$ to $\bs{s}$ is some
      $\sigma_i'$ that maximises $u_i(\bs{s}_{-i}(\sigma_i'))$.
      
      A strategy profile $\bs{s}=(\sigma_1,\dots,\sigma_n)$ where every $\sigma_i$
      is a best response to $\bs{s}$ is a \emph{Nash equilibrium}.
    \end{definition}
    \begin{example}
      In a game of matching pennies, $T$ us a best response for Player One to
      $HH$, and $H$ is a best response for Player Two. The game has no
      equilibria in pure strategies.
    \end{example}
    
    The fact that games as simple as matching pennies may fail to have a
    pure strategy equilibrium casts doubt on its suitability as a solution
    concept. Fortunately, a natural extension of the framework rectifies the
    matter.
        
    \begin{definition}
      A \emph{mixed strategy} for player $i$ in a finite strategic game
      is a probability distribution over $S_i$.
      
      The utility player $i$ obtains from a profile of mixed strategies $S$
      is $\sum p(S')u_i(S')$, where $p(S')$ is the probability assigned
      to the pure profile $S'$ by the mixed strategies in $S$.
    \end{definition}
    
    It is in this context that Nash proved his seminal result:
    
    \begin{theorem}[\cite{Nash1951}]
      Every finite strategic game has an equilibrium in mixed strategies.
    \end{theorem}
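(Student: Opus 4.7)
The plan is to follow Nash's original argument via Brouwer's fixed point theorem, since the statement only asserts existence and the needed topological machinery is classical. The arena is the product of mixed-strategy simplices $\Delta = \Delta_1 \times \cdots \times \Delta_n$, where $\Delta_i$ is the set of probability distributions over $S_i$. Because each $S_i$ is finite, each $\Delta_i$ is a compact, convex subset of a finite-dimensional Euclidean space, and so is $\Delta$.

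First I would observe that the expected utility $u_i(\sigma)$ is a multilinear, hence continuous, function of the profile $\sigma \in \Delta$. This regularity is what allows a fixed-point argument to go through, and in particular guarantees that for every fixed $\sigma_{-i}$ the map $\sigma_i \mapsto u_i(\sigma_{-i}, \sigma_i)$ attains its maximum on the compact set $\Delta_i$.

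The core construction is an auxiliary continuous self-map $f: \Delta \to \Delta$ whose fixed points are exactly the Nash equilibria. For each player $i$ and pure strategy $s \in S_i$, define the ``gain from deviation'' $g_{i,s}(\sigma) = \max\{0, u_i(\sigma_{-i}, s) - u_i(\sigma)\}$, and set
\[
  f_i(\sigma)[s] \;=\; \frac{\sigma_i[s] + g_{i,s}(\sigma)}{1 + \sum_{t \in S_i} g_{i,t}(\sigma)}.
\]
Each $g_{i,s}$ is continuous, the denominator is strictly positive, and the coordinates of $f_i(\sigma)$ sum to one, so $f$ is a continuous map from $\Delta$ into itself. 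Brouwer's theorem then yields a fixed point $\sigma^*$.

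The remaining step, which I expect to be the main subtlety, is to verify that any fixed point of $f$ is indeed a Nash equilibrium, i.e., that $g_{i,s}(\sigma^*) = 0$ for every player $i$ and every pure $s$. The standard trick is this: for each $i$, some pure strategy $s$ in the support of $\sigma_i^*$ must yield utility no greater than $u_i(\sigma^*)$ (averages cannot all strictly exceed their mean), hence $g_{i,s}(\sigma^*) = 0$ for that $s$; the fixed-point equation at coordinate $s$ then forces the common denominator $1 + \sum_t g_{i,t}(\sigma^*)$ to equal $1$, so every $g_{i,t}(\sigma^*)$ vanishes. Since no unilateral deviation to a pure strategy strictly improves $i$'s payoff and utility is linear in $\sigma_i$, no mixed deviation does either, so $\sigma^*$ is a Nash equilibrium. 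The only real obstacle is this final bookkeeping step; the rest is machinery.
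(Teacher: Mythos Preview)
Your argument is correct and is precisely Nash's original proof via Brouwer's fixed point theorem. The paper, however, does not prove this theorem at all: it simply states it as a cited result (\cite{Nash1951}) and moves on, since Nash's theorem is background for the complexity questions the paper actually addresses rather than something the authors establish themselves. So there is no proof in the paper to compare against; you have supplied the standard proof where the paper was content to quote the literature.
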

    \begin{example}
      The unique equilibrium of matching pennies involves both players
      randomising over their sets of strategies by assigning a weight of
      $0.5$ to both $H$ and $T$. In this equilibrium both players
      attain a utility of $0.5$.
    \end{example}
    
    Since every game has an equilibrium, the algorithmic question of asking whether an equilibrium exists
    is not relevant. This motivates decision problems based on qualified notions of equilibria,
    such as the one that concerns us in this paper:
    


 \begin{quote}
$\exists$\GN: Given a Boolean game $G$ and a vector $\bs{v}\in[0,1]^n$, does $G$ have an equilibrium $\bs{s}$ such that $u_i(\bs{s})\geq\bs{v}[i]$ for each player $i$?
\end{quote}
    
    It is natural to also consider a problem closely related to the dual:
    
   
     \begin{quote}
  $\forall$\GN: Given a Boolean game $G$ and a vector $\bs{v}\in[0,1]^n$, does  every equilibrium of $G$, $\bs{s}$, satisfy $u_i(\bs{s})\geq\bs{v}[i]$ for each player $i$?
    \end{quote}
      
  \section{Main Result}
    Our reduction will be from the following NEXP-complete problem:

     \begin{quote}
  {\sc NEXPTM}: Given a non-deterministic Turing machine $M$, an integer in binary $K$ and a string $w$,
  does $M$ accept $w$ in at most $K$ steps?
    \end{quote}
  
    \begin{proposition}
      {\sc NEXPTM} is NEXP-complete.
    \end{proposition}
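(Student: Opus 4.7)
The plan is to verify both directions of the completeness claim in the standard way for a bounded halting problem, making careful note of the fact that the time bound $K$ is presented in binary; this is what pushes the complexity up from NP to NEXP.

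For NEXP membership, I would describe a nondeterministic machine that, on input $(M, K, w)$, simulates $M$ on $w$ nondeterministically for $K$ steps and accepts iff $M$ does. Because $K$ is encoded in binary, $K$ can be as large as $2^{|K|}$ where $|K|$ is the length of its encoding, so the simulation runs in time polynomial in $|M|$, $|w|$, and $K$, which is exponential in the input length $n = |M| + |K| + |w|$. A standard universal nondeterministic simulator (with polylogarithmic overhead) therefore decides NEXPTM in nondeterministic time $2^{O(n)}$.

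For NEXP-hardness, I would take an arbitrary language $L \in \mathrm{NEXP}$, fix a nondeterministic Turing machine $M_L$ deciding $L$ in time at most $2^{n^c}$ for some constant $c$, and define the reduction $w \mapsto (M_L, K_w, w)$ where $K_w$ is the binary representation of $2^{|w|^c}$. The key point is that $M_L$ is a constant (independent of $w$) and $K_w$ has length $|w|^c + 1$, so the whole output has size polynomial in $|w|$, and it is computable by a deterministic polynomial-time transducer. By construction, $w \in L$ iff $M_L$ has an accepting computation on $w$ within $2^{|w|^c}$ steps iff $(M_L, K_w, w) \in \mathrm{NEXPTM}$.

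Neither step presents a real obstacle; the only thing to watch is the encoding convention for $K$, since the same problem with $K$ in unary would merely be NP-complete (a Cook--Levin-style bounded halting argument). As long as I state explicitly that $K$ is binary in both the simulation bound and the construction of $K_w$, both directions go through cleanly, and no further machinery is needed.
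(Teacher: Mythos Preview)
Your proposal is correct and follows essentially the same approach as the paper: nondeterministic simulation for membership, and the standard bounded-halting reduction for hardness, with the crucial observation that $K$ in binary has length polynomial in $|w|$. The paper phrases the hardness direction as stripping the clock from a clocked NEXP machine $N$ to obtain $M$, whereas you fix $M_L$ directly and set $K_w = 2^{|w|^c}$; these are cosmetic variants of the same argument, and your remark that unary $K$ would drop the problem to NP-complete is a useful clarification absent from the paper.
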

    \begin{proof}
      For membership in NEXP, we need only simulate the computation of $M$ on $w$ for
      $K$ steps. Each step can be simulated in non-deterministic polynomial time, and the
      number of steps is exponential in $|K|$.
    
      For hardness, let $N$ be a non-deterministic Turing machine with an exponential time clock $f$.
      Let $M$ be a Turing machine with an identical transition relation to $N$, but with 
      no internal clock. Clearly, $N$ accepts $w$ if and only if $M$ accepts $w$ in at most
      $f(w)$ steps. That is, $(M,f(w),w)$ is a positive instance of {\sc NEXPTM}. Moreover, the
      triple $(M,f(w),w)$ is polynomial in the size of $N$ and $w$: $|M|\leq |N|$, $|w|=|w|$ 
      and as $f(w)\in O(2^{p(|w|)})$, when written in binary it is of size at most $p(|w|)$. 
      This gives us the desired reduction.
    \end{proof}
  
    We can now prove the hardness of $\exists$\GN\. For questions of NEXP-membership,
    see the discussion below.
  
    \begin{theorem}\label{thm:NEXPhard}
      $\exists$\GN\ for Boolean games is NEXP-hard.
    \end{theorem}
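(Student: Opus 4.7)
The plan is to reduce from {\sc NEXPTM}. Given an instance $(M,K,w)$, I would construct in polynomial time a Boolean game $G$ and a payoff vector $\bs{v}\in[0,1]^n$ such that $G$ admits a mixed Nash equilibrium meeting $\bs{v}$ if and only if $M$ accepts $w$ within $K$ steps. The central difficulty is a size mismatch: an accepting computation tableau of $M$ on $w$ has $\Theta(K^2)$ cells, which is exponential in $|K|$, whereas $G$ must be polynomial in $|(M,K,w)|$. The tableau must therefore be encoded implicitly, as the support of the players' mixed strategies, so mixed equilibria rather than pure ones are essential to the reduction.

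Concretely, I would introduce two symmetric \emph{prover} players whose pure strategies each encode a single cell of the tableau: $O(\log K)$ bits $\Time$ and $\Tape$ for the time index $t$ and cell index $c$, together with constantly many bits $\State$, $\Head$, $\Zero$, $\One$ describing the state, head position, and symbol of cell $c$ at time $t$. To let each prover check local consistency inside its own goal formula, I would additionally give it ``shadow'' copies ($\pState$, $\pTape$, $\pHead$, etc.) describing its view of the cell at time $t+1$. The goal formula would then be $\Init\wedge\Final\wedge\Rules$, where $\Init$ forces the tape at time $0$ to agree with $w$, $\Final$ checks that some configuration along the way is accepting, and $\Rules$ is a polynomial-size formula verifying that the triple of neighbours at time $t$ produces the shadow cell at $t+1$ according to $M$'s transition relation. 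An accepting computation then corresponds to each prover playing the uniform distribution over $(t,c)$ and setting the auxiliary bits in accordance with the true tableau.

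The delicate step is forcing a prover to actually randomise uniformly over $(t,c)$ and to agree with the other prover; without such pressure a prover could concentrate probability on time/cell pairs it happens to guess well, or the two provers could describe inconsistent tableaux. I would add auxiliary \emph{checker} players engaging each prover in matching-pennies style sub-games over the bits of $\Time$ and $\Tape$, with payoffs tuned so that any non-uniform marginal is strictly exploitable; this pins each prover to the uniform distribution in equilibrium. Cross-prover consistency would be enforced through predicates such as $\AgreeTime$, $\AgreeCell$, $\AgreeHead$, $\Match$, which reward a checker precisely when the two provers' draws describe the same cell, or adjacent cells at consecutive times, and yet the reported content or transitions disagree.

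The main obstacle, and the point I would spend the most effort on, is the balancing of $\bs{v}$: the matching-pennies gadgets must force uniformity over $(t,c)$ without simultaneously forcing randomisation over the content, state, and head bits, which instead have to be pinned down deterministically by the tableau; the consistency checks must fire on exactly those pure-strategy pairs that correspond to overlapping tableau positions; and in the ``yes'' case the provers must meet their thresholds simultaneously with every checker sub-game sitting at its indifference equilibrium. Once these payoffs are calibrated, the forward direction is immediate from the tableau, and the converse follows because any equilibrium meeting $\bs{v}$ must, by the matching-pennies argument, be uniform over $(t,c)$, and by the consistency gadgets must describe one and the same locally valid, initial-on-$w$, and accepting computation of $M$, which yields a positive {\sc NEXPTM} instance.
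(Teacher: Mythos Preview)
Your high-level plan matches the paper's: reduce from {\sc NEXPTM}, encode the exponential tableau as the support of a mixed strategy, and use matching-pennies sub-games to force uniform randomisation over the time/cell indices. The gap is in the placement of the transition check. You give each prover the data for a single cell $(t,c)$ together with a shadow for $(t{+}1,c)$, and then ask that prover's own goal $\Rules$ to verify ``that the triple of neighbours at time $t$ produces the shadow cell at $t{+}1$''. But the prover does not hold the triple: whether $(t{+}1,c)$ is correct depends on $(t,c{-}1)$, $(t,c)$, $(t,c{+}1)$, and only the middle one is in hand. In particular, when the head sits at $c{\pm}1$ at time $t$, the prover cannot determine from its own variables whether the head should arrive at $c$ at time $t{+}1$, so $\Rules$ cannot be falsified by a faulty shadow there. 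Pushing the check into the cross-prover predicates does not rescue the argument as stated either, because no single player ever sees a contiguous window; the checker can only compare two independently drawn one-cell snapshots, which is not enough to validate a three-cell transition rule.

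The paper resolves exactly this point by separating roles. Player One names a single cell and is the sole source of the tableau. Two additional verifiers each name a \emph{window}: Player Two names a pair of adjacent cells at one time (enforced by a $\SUCC$ conjunct) and polices head-location consistency along a row; Player Three names two aligned triples at consecutive times and carries $\Rules$ in his own goal. Crucially, both verifiers also carry a $\MatchOne$ conjunct forcing agreement with Player One wherever their window overlaps his cell. Because each verifier already holds an entire window inside one pure strategy, the local check ($\Cons_2$ or $\Rules$) is well defined; the $\MatchOne$ link then means that any error in Player One's tableau creates a location at which the verifier cannot simultaneously satisfy $\MatchOne$ and its local consistency, dragging its payoff strictly below the calibrated threshold. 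Your symmetric two-prover picture lacks any player who simultaneously sees a window, and repairing that essentially reconstructs the paper's asymmetric prover/verifier structure.
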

    \begin{proof}
      We will give a reduction from {\sc NEXPTM}. Given a triple $(M,K,w)$ we shall 
      construct, in polynomial time, a Boolean game $G$ and a utility vector 
      $\bs{v}$, such that $G$ has an equilibrium where player $i$'s utility is at 
      least $\bs{v}[i]$ if and only if $M$ accepts $w$ in $K$ steps or less.

      For convenience, we augment $M$ with a ``do nothing" transition: if $M$ is at an 
      accepting state, then we allow it to move to the next computation step without
      moving the head, changing state, or writing anything to the tape. It is clear 
      that augmenting $M$ in such a fashion does not change the language accepted by $M$,
      but it ensures that the machine state is defined at all computation steps; we
      do not need to worry about the case where the machine accepts in under $K$ steps,
      as if it does, it will still accept at step $K$.
      
      Let $k=|K|$, and $q$ be the number of states of $M$.
      
      A computation history of $M$ on $w$ could be seen as a $K\times K$ table, or for
      simplicity $2^k\times 2^k$, padding as needed. Each
      row contains the tape contents and head position at a certain computation step. The
      number of bits needed to specify an entry of this table is $2k$.
      
      A way to visualise the proof is that in our game $G$, which consists of six
      players, Player One is equipped with variables that allow him to describe a
      single entry of this table. Player Four plays a partial matching pennies game
      against Player One, thereby forcing Player One to play a mixed strategy
      randomising over all entries of the table, and thus specifying an entire
      computation history with his mixed strategy. Player Two then verifies that
      the mixed strategy provided by Player One contains a consistent description
      of the head location at each computation step, and Player Three checks that
      every two consecutive steps are linked by exactly one transition rule. Players
      Five and Six play matching pennies with Players Two and Three to force them to
      randomise across all table entries.

      To this end, let:
      \begin{align*}
      \Phi_1&=\{\Zero_1,\One_1,\Head_1,\Left_1,\Right_1\}\\
      &\cup\{\Time_1^i\}_{1\leq i\leq k}\cup\{\Tape_1^i\}_{1\leq i\leq k}\cup\{\State_1^i\}_{1\leq i\leq q}.\\
      \end{align*}
      The intended meaning of $\Time_1^i$ (respectively $\Tape_1^i$)
      is the value of the $i$th most significant bit of the integer denoting the index of the
      computation step (respectively tape cell) in question, given the standard convention of interpreting 
      ``true" as 1 and ``false" as 0.
      A truth assignment by Player One can therefore be read as: at the computation step specified
      by $\Time_1^1,\dots,\Time_1^k$ the tape cell specified by $\Tape_1^1,\dots,\Tape_1^k$ contains
      0 if $\Zero_1$, 1 if $\One_1$ and is blank if neither. The machine head is hovering over the 
      cell in question if $\Head_1$, and is located to 
      the left or right
      of that cell respectively if $\Left_1$ or $\Right_1$. If the head is over the cell in question, 
      the machine is in state $i$ if $\State_1^i$ (if the head is not over the cell, $\State_1^i$ is
      a junk variable that has no meaning).
      
      Player One's goal formula is a conjunction of four subformulae:
      $$\gamma_1=\Init \wedge \Final\wedge \Cons_1\wedge \neg\gamma_4.$$
      
      Intuitively, $\Init$ means that if the player plays the first computation step, their
      description of the machine must agree with the initial configuration of $M$ on $w$.
      $\Final$ means that if the player plays the last ($K$th) computation step, the machine
      must be in an accepting state. $\Cons_1$ states the description of the machine must be internally consistent.
      The final conjunct is to force the player to randomise across all computation steps and
      tape cells, to which we will return later.
      
      $\Init$ requires that at time zero, the configuration of the machine is faithfully 
      represented by Player One's play. This takes the form of an implication where the antecedent
      states that we are at step zero:
      $$\Init=(\bigwedge_{1\leq i\leq k}\neg \Time_1^i)\ria \Consequent.$$
      The consequent itself is a conjunction of three further subformulae,
      for the head, the state and the tape.
      $$\Consequent=\InitHead\wedge \InitState\wedge \InitTape.$$
      
      The head requirement states that the head is at the leftmost cell. That is, at cell zero 
      $\Head_1$ is true, and
      at every other cell $\Left_1$ is true:
      \begin{align*}
	\InitHead&=\Big((\bigwedge_{1\leq i\leq k}\neg \Tape_1^i)\ria \Head_1\Big)\\ 
	&\wedge\Big(\neg(\bigwedge_{1\leq i\leq k}\neg \Tape_1^i)\ria \Left_1\Big).
      \end{align*}
      
      The state requirement is simply $M$'s initial state:
      $$\InitState=\State_1^{\mathit{initial}}.$$
      
      The tape requirement is a conjunction of $|w|+1$ implications. The first $|w|$ implications state that 
      if the tape cell chosen is within the first $|w|$ cells, then its contents must agree with
      $w$. If we use $i$ as shorthand for the conjunction of tape variables expressing $i$, and $w[i]$
      for $\Zero_1$ or $\One_1$ depending on the $i$th bit of $w$, this has the following form:
      $$\InitTape=\bigwedge_{0\leq i<|w|} (i\ria w[i])\wedge \BlankCells.$$
      
      Note that this formula is linear in $|w|$, so the construction so far
      was polynomial.
      
      The last formula 
      in $\InitTape$ states that all other cells are blank.
      $$\BlankCells= \neg(\bigvee_{0\leq i<|w|} i)\ria (\neg \Zero_1\wedge\neg \One_1).$$
      
      $\Final$ states that at computation step $K$, the machine accepts. If we use $K$ as shorthand
      for the appropriate conjunction of time variables, we get the following implication:
      $$\Final=K\ria \State_1^{\mathit{accepting}}.$$
      
      $\Cons_1$ requires that the player's description of a given computation step and cell is internally
      consistent. This means the cell cannot have both 0 and 1 on it, the head must be either over the cell
      or to one direction and the machine must be in exactly one state. It is worth noting that this says nothing
      about whether Player One's description of different steps and cells are consistent with each other: this
      is the task of Players Two and Three.

      For $\Cons_1$, we introduce a generalised XOR symbol, which we denote $\OneOf$, 
      with the interpretation that $\OneOf(\varphi_1,\dots,\varphi_n)$ is true if and only if
      exactly one $\varphi_i$ is. Such a symbol could be replaced by a propositional logic formula
      polynomial in the size of $\varphi_1,\dots,\varphi_n$ - simply take the disjunction of all $n$ admissible
      possibilities. This gives us the desired formula:
      \begin{align*}
	\Cons_1&=\neg(\Zero_1\wedge \One_1)\\
	&\wedge\OneOf(\Head_1,\Left_1,\Right_1)\\
	&\wedge\OneOf(\overline{\State_1^i}).
      \end{align*}
      By $\overline{\State_1^i}$ we mean $\State_1^1,\dots,\State_1^q$.
      
      To finish the description of $\gamma_1$, we turn to Player Four. Player Four
      is playing a partial matching pennies game with Player One over the time and tape
      variables. We thus equip her with the following:
      \begin{align*}
	\Phi_4&=\{\Time_4^i\}_{1\leq i\leq k}\cup\{\Tape_4^i\}_{1\leq i\leq k}.
      \end{align*}
      The objective is to guess the same computation step and cell index as player
      one:
      \begin{align*}
      \gamma_4&=\Big(\bigwedge_{1\leq i\leq k}(\Time_1^i\lra \Time_4^i)\Big)\\
      &\wedge
      \Big(\bigwedge_{1\leq i\leq k}(\Tape_1^i\lra \Tape_4^i)\Big).
      \end{align*}
      
      Player Two's purpose is to verify the consistency of Player One's description
      of the head. This involves verifying that at a given computation step the
      $\Head_1$ variable is true in exactly one cell, $\Left_1$ is true in every cell
      to the right and $\Right_1$ is true in every cell to the left. She controls
      the following variables:
      \begin{align*}
	\Phi_2&=\\
	&\{\Head_2,\sHead_2,\Left_2,\sLeft_2,\Right_2,\sRight_2\}\\
	&\cup\{\Tape_2^i\}_{1\leq i\leq k}\cup\{\sTape_2^i\}_{1\leq i\leq k}\cup\{\Time_2^i\}_{1\leq i\leq k}.
      \end{align*}
      
      The lowercase ``$s$" can be read as ``successor". The intended meaning of these variables is that
      $\Tape_2^1,\dots,\Tape_2^k$ name a cell and $\sTape_2^1,\dots,\sTape_2^k$ the cell directly to the
      right of it. The other variables state the location of the head in relation to these two cells
      at the computation step specified by the time variables.
      
      Player Two's goal formula is a conjunction of four subformulae:
      $$\gamma_2=\MatchOne_2\wedge \Cons_2\wedge \SUCC_2 \wedge \neg\gamma_5.$$
      
      Intuitively, $\MatchOne_2$ states that Player Two ought to play the same head configuration
      as dictated by Player One. $\Cons_2$ requires that this configuration be internally consistent.
      $\SUCC_2$ is to ensure that the two cells chosen are indeed consecutive.
      
      Before we state $\MatchOne_2$ we ought to first ask what we mean by saying that players one and two
      play the same head configuration. As in any given (pure) strategy profile, either player will be 
      describing a single computation step and at most two cells; if it turns out that they are speaking
      about different step/cell configurations we should not be concerned about whatever claims they
      make. Only in the instance where they happen to refer to the same step/cell should we expect
      accord. Since Player Two is referring to two cells in any play, we require that if either of the
      cells she references coincides with that referenced by Player One, they must agree.
      
      The desired formula is thus of the following form:
      \begin{align*}
	\MatchOne_2&=\\
	&\AgreeTime\ria\Big((\AgreeCell\ria \AgreeHead)\\
	&\wedge(\sAgreeCell\ria \sAgreeHead)\Big).
      \end{align*}
      
      The subformulae are as follows:
      \begin{align*}
	\AgreeTime&=\bigwedge_{1\leq i\leq k}(\Time_1^i\lra \Time_2^i).\\
	\AgreeCell&=\bigwedge_{1\leq i\leq k}(\Tape_1^i\lra \Tape_2^i).\\
	\sAgreeCell&=\bigwedge_{1\leq i\leq k}(\Tape_1^i\lra \sTape_2^i).\\
	\AgreeHead&=(\Head_1\lra \Head_2)\\
	&\wedge (\Left_1\lra \Left_2)\\
	&\wedge (\Right_1\lra \Right_2).\\
	\sAgreeHead&=(\Head_1\lra \sHead_2)\\
	&\wedge (\Left_1\lra \sLeft_2)\\
	&\wedge (\Right_1\lra \sRight_2).
      \end{align*}
      
      Internal consistency amounts simply to the conjunction of the valid combinations of
      claims about the head:
      \begin{align*}
	\Cons_2&=(\Right_2\wedge \sRight_2)\vee(\Right_2\wedge \sHead_2)\\
	&\vee
	(\Head_2\wedge \sLeft_2)\vee(\Left_2\wedge \sLeft_2)
      \end{align*}
      
      $\Succ_2$ states that the two tape locations are, in fact, consecutive. We will
      prove a lemma to show that this is concisely expressible in propositional
      logic.
      

      \begin{lemma}
	Let $\SUCC(p_1,\dots,p_n;q_1,\dots,q_n)$ be a formula that is true
	if and only if the binary integer encoded by $q_1,\dots, q_n$
	is the successor of the binary integer encoded by $p_1,\dots,p_n$. As a convention,
	$2^n-1$ has no successor.
	
	$\SUCC(p_1,\dots,p_n;q_1,\dots,q_n)$ can be replaced by a 
	propositional formula of size polynomial in $p_1,\dots,p_n$ and
	$q_1,\dots,q_n$.
      \end{lemma}
      \begin{proof}
	We take advantage of the fact that to increment a binary integer we only need to modify
	the rightmost consecutive block of 1s, and there are only $n$ such possible blocks.
	
	Since we have a boundary condition to consider, we require that the first integer is not
	$2^n-1$:
	$$\SUCC(p_1,\dots,p_n;q_1,\dots,q_n)=
	\neg(\bigwedge_{1\leq i\leq n} p_i)\wedge \Succ'.$$
	$\Succ'$ is then:
	\begin{align*}
	  &\Big(\neg p_1\ria\big(q_1\wedge\bigwedge_{i=2}^n(p_i\lra q_i)\big)\Big)\\
	  \wedge&\Big((p_1\wedge\neg p_2)\ria\big(\neg q_1\wedge q_2\wedge\bigwedge_{i=3}^n
	  (p_i\lra q_i)\big)\Big)\\
	  \wedge&\Big((p_1\wedge p_2\wedge\neg p_3)\ria\big(\neg q_1\wedge\neg q_2\wedge q_3
	  \wedge\bigwedge_{i=4}^n(p_i\lra q_i)\big)\Big)\\
	  &\vdots\\
	  \wedge&\Big((\neg p_n\wedge\bigwedge_{i=1}^{n-1} p_i)\ria\big((\bigwedge_{i=1}^{n-1} \neg q_i)\wedge q_{i+1}\big)\Big).\\
	\end{align*}
	This is quadratic in the number of variables, giving us the desired result.
      \end{proof}
      
      $\Succ_2$ can then be stated simply:
      $$\Succ_2=\SUCC(\overline{\Tape_2^i};\overline{\sTape_2^i}).$$
      
      Finally, Player Five is trying to guess Player Two's choice of cell and computation step.
      \begin{align*}
	\Phi_5&=\{\Time_5^i\}_{1\leq i\leq k}\cup\{\Tape_5^i\}_{1\leq i\leq k}.\\
      \gamma_5&=\bigwedge_{i=1}^k(\Tape_2^i\lra \Tape_5^i)\wedge
      \bigwedge_{i=1}^k(\Time_2^i\lra \Time_5^i).
    \end{align*}
      
      Player Three's purpose is to verify that the tape contents in successive computation steps
      respect the transition rules of $M$. To do this he specifies a total of six cells and two
      computation steps: consecutive triples in consecutive steps. Then he verifies that the tape
      contents, head position and machine state are in agreement with some rule of $M$.
      \begin{align*}
	\Phi_3=\{&\pHead_3, \Head_3,\sHead_3,\npHead_3, \nHead_3,\\ 
	&n\sHead_3, \pZero_3, \Zero_3, \sZero_3, \npZero_3,\\ 
	&\nZero_3,\nsZero_3,\pOne_3,\One_3,\sOne_3,\\
	&\npOne_3,\nOne_3,\nsOne_3\}\\
	\cup\{&p\State_3^i\}_{1\leq i\leq q}\cup\{\State_3^i\}_{1\leq i\leq q}\\
	\cup\{&s\State_3^i\}_{1\leq i\leq q}\cup\{np\State_3^i\}_{1\leq i\leq q}\\
	\cup\{&n\State_3^i\}_{1\leq i\leq q}\cup\{ns\State_3^i\}_{1\leq i\leq q}\\
	\cup\{&p\Tape_3^i\}_{1\leq i\leq k}\cup\{\Tape_3^i\}_{1\leq i\leq k}\\
	\cup\{&\sTape_3^i\}_{1\leq i\leq k}\cup\npTape_3^i\}_{1\leq i\leq k}\\
	\cup\{&\nTape_3^i\}_{1\leq i\leq k}\cup\{\nsTape_3^i\}_{1\leq i\leq k}\\
	\cup\{&\Time_3^i\}_{1\leq i\leq k}\cup\{\nTime_3^i\}_{1\leq i\leq k}.
      \end{align*}
      The ``$p$" can be read as ``predecessor", referring to the cell to the left, and ``$n$" as ``next computation step".
      The intended meaning is simply the state and tape contents in each of the six cells, as well as whether the
      head is over that cell.
      
      Player Three's goal formula is a conjunction of five subformulae:
      $$\gamma_3=\MatchOne_3\wedge \Triple\wedge \Succ_3\wedge \Rules \wedge \neg\gamma_6.$$
      
      $\MatchOne_3$ states that if any of the step/cell pairs named by Player Three coincide
      with the one named by Player One, Player Three must agree with Player One. $\Triple$ requires
      that the three cells named in either computation step should be a consecutive triple, and the triple
      at either step must be the same. $\Succ_3$ requires
      the two computation steps named to be consecutive. $\Rules$ is to verify that the configuration
      thus described is consistent with a rule of $M$.
      
      $\MatchOne_3$ is a conjunction of a total of six statements, depending on which step/cell
      pair coincides with that played by Player One. We will only give one such statement below,
      in the case that Player One named the same step as $\Time_3^1,\dots,\Time_3^k$ and the same
      cell as $p\Tape_3^1,\dots,p\Tape_3^k$. The other five statements are obtained in the obvious
      manner.
      \begin{align*}
	\Big(&\bigwedge_{i=1}^k(\Time_1^i\lra \Time_3^i)\wedge
	\bigwedge_{i=1}^k(\Tape_1^i\lra p\Tape_3^i)\Big)\ria\\
	\Big((&\Zero_1\lra p\Zero_3)\wedge (\One_1
	\lra p\One_3)\\
	&\wedge (\Head_1\lra p\Head_3)\wedge\bigwedge_{i=1}^q(\State_1^i\lra p\State_3^i)\Big).
      \end{align*}
      
      $\Triple$ states that the tape cells selected are consecutive triples, and that the same triple
      is chosen in both steps. It is worth noting
      that given our previous definition of successor, if Player Three is to satisfy this conjunct
      then the middle cell cannot be 0 or $2^k-1$.
      \begin{align*}
	\Triple&=\SUCC(\overline{\pTape_3^i};\overline{\Tape_3^i})\\
	&\wedge\SUCC(\overline{\Tape_3^i};\overline{\sTape_3^i})\\
	&\wedge\SUCC(\overline{\npTape_3^i};\overline{\nTape_3^i})\\
	&\wedge\SUCC(\overline{\nTape_3^i};\overline{\nsTape_3^i}\\
	&\wedge\bigwedge_{1\leq i\leq k}(\Tape_3^i\lra \nTape_3^i).
      \end{align*}
      
      $\Succ_3$ requires that the computation steps be consecutive:
      $$\Succ_3=\SUCC(\overline{\Time_3^i};\overline{\nTime_3^i}).$$
      
      $\Rules$ is a conjunction of four formulae: three of the formulae are conjunctions containing an implication for each $(r,s)\in Q\times\{0,1,\bot\}$,
      representing the machine's behaviour if it reads $s$ in state $r$ and the head is over the left, centre or right cell respectively. The fourth
      term is $\NoHead$, to handle the case where the head is not over any cell in the triple:
      $$\Rules=\Left\wedge \Centre\wedge \Right\wedge\NoHead.$$
      We will examine $\Left$ and $\NoHead$, understanding that $\Centre$ and $\Right$ are handled in similar fashion.
      \begin{align*}
      \Left&=\Big(\bigwedge_{(r,s)\in Q\times\{0,1,\bot\}}\big((\pState_3^r\wedge s)\ria\\
      &\OneOf(\overline{\Rule[(r,s)\ria(r',s',D)]})\big)\Big).
      \end{align*}
      The $s$ in the antecedent is meant to be replaced by $\pZero_3$, $\pOne_3$, or $\neg(\pZero_3\vee \pOne_3)$ as
      appropriate. The intuition of the $\Rules$ term is that should the machine read $s$ in state $r$ it should pick exactly
      one of the rules available to it, and if the head is not present then the tape contents should not change.
      
      The subformula to deal with a specific rule can be broken up as follows:
      $$\Rule[(r,s)\ria(r',s',D)]=L\wedge B.$$
      $L$ describes the behaviour of the machine if the left cell is not the leftmost cell on the tape, 
      $B$ deals with the boundary
      case where it is.
      
      We will give an example of how $\Rule[(q_3,0)\ria(q_4,1,L)]$ would be handled. All rules except
      ``do nothing" can be handled similarly, and ``do nothing" would merely assert that if the machine
      reads an accepting state, then nothing changes.
      
      The $L$ part triggers if the head is over the leftmost cell in the triple, and the leftmost cell is not cell 0.
      It then ensures that in the next computation step the leftmost cell contains 1 and the other cells are unchanged.
      Since the head leaves the monitored triples we need no terms to account for it.
      \begin{align*}
	L=&\Big(\neg(\bigwedge_{1\leq i\leq k}\neg \pTape_3^i)\wedge \pHead_3\Big)\ria\\
	&\Big(\npOne_3\wedge
	(\Zero_3\lra \nZero_3)\\
	&\wedge(\sZero_3\lra \nsZero_3)\wedge(\One_3\lra \nOne_3)\\
	&\wedge(\sOne_3\lra \nsOne_3)\Big).
      \end{align*}
      In the boundary case the head is over the leftmost cell of the tape, so when it attempts to move
      left it instead stands still.
      \begin{align*}
	B=&\Big((\bigwedge_{1\leq i\leq k}\neg \pTape_3^i)\wedge \pHead_3\Big)\ria\\
	&\Big(\npState_3^4\wedge \npOne_3\wedge \npHead_3\\
	&(\Zero_3\lra \nZero_3)
	\wedge(\sZero_3\lra \nsZero_3)\\
	&\wedge(\One_3\lra \nOne_3)\wedge(\sOne_3\lra \nsOne_3)\Big).
      \end{align*}
      
      Finally, the $\NoHead$ term asserts in the absence of a head the tape contents do not change.
      \begin{align*}
      \NoHead&=(\neg \pHead_3\wedge\neg \Head_3\wedge\neg \sHead_3)\ria\\
      &\Big((\pOne_3\lra \npOne_3)\wedge(\nZero_3\lra \npZero_3)\\
      &\wedge(\One_3\lra \nOne_3)\wedge(\Zero_3\lra \nZero_3)\\
      &\wedge(\sOne_3\lra \nsOne_3)\wedge(\sZero_3\lra \nsZero_3)
      \Big).
      \end{align*}
      This brings us to the last player, who is trying to guess the first step and central 
      cell chosen by Player Three:
      \begin{align*}
	\Phi_6&=\{\Time_6^i\}_{1\leq i\leq k}\cup\{\Tape_6^i\}_{1\leq i\leq k}.\\
      \gamma_6&=\bigwedge_{i=1}^k(\Tape_3^i\lra \Tape_6^i)\wedge
      \bigwedge_{i=1}^k(\Time_3^i\lra \Time_6^i).
  \end{align*}

      The construction so far has been polynomial. We now claim that $M$ having an accepting 
      run on $w$ in at most $K$ steps is equivalent
      to the constructed game having a Nash equilibrium where Players One, Two and Three
      have the following guaranteed payoffs:
      \begin{align*}
	\bs{v}[1]&=\frac{2^{2k}-1}{2^{2k}}.\\
	\bs{v}[2]&=\frac{2^{k}(2^k-1)-1}{2^{k}(2^k-1)}.\\
	\bs{v}[3]&=\frac{(2^k-2)(2^k-1)-1}{(2^k-2)(2^k-1)}.
      \end{align*}
      
      First, suppose $M$ has an accepting run on $w$ in at most $K$ steps. 
      Consider the profile where Player One randomises over all step/cell combinations
      with equal weight, and at each step/cell combination plays his variables in
      accordance to the accepting run. Player Four also randomises over all 
      step/cell combinations with equal weight. Player Two randomises over all computation
      steps and the first $2^k-1$ cells. Her other variables she plays in accordance to the
      run. Player Five likewise randomises over all steps and the first $2^k-1$ cells.
      Player Three randomises over the first $2^k-1$ steps and the $2^k-2$ cells between
      the first and last. His other variables he plays in accordance to the run. Player
      Six randomises over the same $2^k-1$ steps and the $2^k-2$ cells.
      
      In such a profile, Players One, Two and Three will satisfy their goals unless their
      step/cell combination is guessed by their opponent. Given our setup, this will happen with
      probabilities $1/2^{2k}$, $1/2^{k}(2^k-1)$ and $1/(2^k-2)(2^k-1)$ respectively, giving us
      the payoffs $\bs{v}[1]$, $\bs{v}[2]$ and $\bs{v}[3]$. It remains to see that this profile is in equilibrium.
      
      Let us first consider Players Four through Six. Any pure strategy by Player Four is a
      step/cell pair, and hence, given the play of Player One, has a $1/2^{2k}$ chance of
      satisfying $\gamma_4$. Player Four is thus indifferent between the current situation
      and any deviation. For Player Five any pure strategy using the first $2^k-1$ cells will
      have a $1/2^{k}(2^k-1)$ chance of satisfying $\gamma_5$, and any other pure strategy 0.
      Player Five thus likewise has no incentive to deviate. In the same fashion, any
      pure strategy for Player Six will satisfy $\gamma_6$ with probability $1/(2^k-2)(2^k-1)$ or
      0, so she is also indifferent.
      
      In the case of Player One, observe that no matter what pure strategy he picks, there is a
      $1/2^{2k}$ chance of Player Four guessing the cell/step component and thus making $\gamma_1$
      false. It follows that any such strategy will yield at most a $\bs{v}[1]$ chance of satisfying $\gamma_1$.
      For Player Two, if she picks a pure strategy using the first $2^k-1$ cells there will likewise
      be a $1/2^{k}(2^k-1)$ chance of her step/cell combination being guessed. If she picks a pure
      strategy using the last cell, she will be unable to satisfy the $\Succ_2$ component of $\gamma_2$,
      yielding a utility of 0. For Player Three, any pure strategy using the $2^k-1$ steps and the
      $2^k-2$ cells randomised over by six will have a $1/(2^k-2)(2^k-1)$ chance of being guessed,
      and any other choice of pure strategy will violate either $\Triple$ or $\Succ_3$. This establishes 
  that the described profile is in equilibrium.
      
      Next, suppose that no accepting run exists. We claim that in any equilibrium Player One
      will still obtain a utility of $\bs{v}[1]$, but either Player Two or three will be unable to secure a 
    payoff of $\bs{v}[2]$, $\bs{v}[3]$. For the first part, note that for any choice of step/cell by Player One, 
    the remaining variables
      can be set to satisfy $\Init$, $\Final$ and $\Cons_1$ unilaterally. It is sufficient to simply
      respect the initial configuration of the machine at step zero, play an accepting state at
      step $K$, and any internally consistent description elsewhere. Any strategy that does not
      satisfy $\Init$, $\Final$ and $\Cons_1$ is thus dominated and can be excluded from consideration.
      All that remains is the choice of cell/step and it is easy to see that the only equilibrium play
      would involve giving every pair equal weight.
      
      Player One's play will thus describe a sequence of $2^k$ configurations of $M$, with the initial
      configuration at step zero and an accepting state at step $K$. However, as $M$ has no accepting
      run on $w$ in $K$ steps, this sequence cannot represent a valid computation and a violation
      must occur somewhere.
      
      If this violation involves the assertion of the presence of more than one head or 
      the $\Left_1$, $\Right_1$ variables
      incorrectly specifying the location of the head, we claim that Player Two cannot obtain a
      utility of $\bs{v}[2]$.
      
      Observe that in this case there must exist two consecutive cells at some time step where Player
      One plays one of the following combinations:
      \begin{center}
      \begin{tabular}{l | l}
	Cell $i$& Cell $i+1$\\
	\hline
	$\Left_1$&$\Right_1$\\
	$\Left_1$&$\Head_1$\\
	$\Head_1$&$\Head_1$\\
	$\Right_1$&$\Left_1$\\
	$\Head_1$&$\Right_1$\\
      \end{tabular}
      \end{center}
      In this case, should Player Two play a strategy involving cell $i$, since she
      is committed to playing a legal head assignment she will have to disagree
      with Player One on either cell $i$ or cell $i+1$. This means she will suffer a $1/2^{2k}$
      chance of having $\MatchOne$ falsified if Player One plays the cell in question. As there is
      still at least a $1/2^{k}(2^k-1)$ chance of having the cell/step combination guessed by Player Five,
      this means the maximum utility Player Two can obtain in this case is $\bs{v}[2]-1/2^{2k}+1/2^{2k}2^{k}(2^k-1)$.
      (The last term is to avoid double counting the case where both Player One and Player Five name the same cell/step combination.)
      
      Of course, Player Two may opt in this case not to play any strategies involving cell $i$.
      This will however mean that she is randomising over at most $(2^k-2)$ cells, and
      Player Five will randomise accordingly, meaning the highest utility she can obtain is
      $\frac{2^{k}(2^k-2)-1}{2^{k}(2^k-2)}$.

      Suppose now that Player One does not make such a violation. The remaining possibilities
      for an incorrect run are:

      \begin{enumerate}
  \item
    The head make an illegal transition.
  \item
    The tape contents undergo an illegal change.
  \item
    The state undergoes an illegal change.
      \end{enumerate}


      Let us deal with case 1. Suppose between step $t$ and $t+1$ the head, which is at cell $i$ at $t$, 
      performs an illegal transition.
      This could mean moving more than one cell in a direction, moving off the edge of the tape, staying still
      in a non-accepting state or moving one cell left or right without a justifying transition rule. 
      Observe that neither of these possibilities is consistent with the $\Rules$ requirement. As such,
      should Player Three pick step $t$ and cell $i$, he will have to disagree with Player One on
      the movement of the head, thereby running a risk of falsifying his formula should Player One
      play $t$ and $i$. This will prevent Player Three from obtaining $v_3$ utility for the same
      reasoning as with Player Two.
      
      In case 2, there would exist steps $t$ and $t+1$, and a cell $i$ the contents of which would change
      without a justifying rule. This, too, would violate $\Rules$. For case 3, we note that by the machine
      state we mean the state variable that occurs in the same cell as the head: the value of the other state
      variables is of no account. As such, $\Rules$ again would be violated as it requires the correct state to
      be propagated to cell hosting the head. This completes the proof.
    \end{proof}
    
    We can adapt this proof to show that $\forall$\GN\ is coNEXP-hard. Note that this does not follow
    immediately: $\forall$\GN\ is not simply the complement of $\exists$\GN. Letting $\bs{s}$ range over equilibrium profiles,
    $\forall$\GN\ is the question whether:
    $$\forall\bs{s} \st \forall i \st u_i(\bs{s})\geq\bs{v}[i]$$
    the complement of $\forall$\GN\ is then:
    $$\exists\bs{s} \st \exists i \st u_i(\bs{s})<\bs{v}[i].$$
    To show that $\forall$\GN\ is coNEXP-hard we need only show that the latter problem is NEXP-hard.
    
    \begin{corollary}
      $\forall$\GN\ is coNEXP-hard.
    \end{corollary}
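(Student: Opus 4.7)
The plan is to reduce NEXPTM to the complement problem $\exists\bs{s}\st\exists i\st u_i(\bs{s})<\bs{v}[i]$, which by the observation above suffices for coNEXP-hardness of $\forall$\GN. Given $(M,K,w)$ I would construct a game $G'$ and threshold vector $\bs{v}'$ such that $M$ accepts $w$ in at most $K$ steps if and only if $G'$ has an equilibrium in which some player's utility is strictly below the corresponding threshold. The reduction must be carefully engineered because, as noted, $\forall$\GN\ is not the complement of $\exists$\GN: in the game of Theorem~\ref{thm:NEXPhard}, ``bogus'' equilibria (where Player One plays an inconsistent run satisfying $\Init\wedge\Final\wedge\Cons_1$ but not corresponding to a valid computation) exist in both the accepting and the non-accepting case, so the original construction as stated gives the same answer (yes) on both sides and must be modified.

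The key observation I would exploit is that the matching-pennies players (Four, Five and Six) obtain \emph{different} utilities in the good equilibrium than in any bogus one: in the good equilibrium Players Two and Three have maximal support, making the matchers' job hardest, while in any bogus equilibrium their supports shrink and the matchers' probability of a successful guess rises. Setting $\bs{v}'[5]$ strictly between the good-equilibrium matching probability $1/(2^k(2^k-1))$ and the minimally-bogus value $1/(2^k(2^k-2))$ arranges that Player Five is strictly below $\bs{v}'[5]$ precisely in the good equilibrium; the remaining thresholds are set trivially ($\bs{v}'[1]=\bs{v}[1]$, and $\bs{v}'[i]=0$ for $i\in\{2,3,4,6\}$) so that no player other than Five can ever fall below. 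With these choices, the positive direction is easy: if $M$ accepts $w$ in $K$ steps, the good equilibrium of Theorem~\ref{thm:NEXPhard} exists and places Player Five strictly below $\bs{v}'[5]$, witnessing that the complement of $\forall$\GN\ is yes.

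Before this goes through, a subtlety must be handled: in the non-accepting case Player One might play a run whose only inconsistency is in the transitions, with a head description that is internally consistent within each time step, letting Player Two maintain full support and pushing Player Five below $\bs{v}'[5]$. To rule this out I would strengthen Player Two's goal formula so that she verifies not only head consistency but also the transition rules across the two time steps bracketing her chosen cells; this extension is analogous to $\Rules$ in $\gamma_3$ and keeps the construction polynomial, but forces Player Two to shrink her support whenever Player One's run contains any inconsistency, head-level or rule-level. The main obstacle is then re-running the equilibrium analysis of Theorem~\ref{thm:NEXPhard} with the strengthened formula: one must verify that in the non-accepting case no pure strategy of Player One allows Player Two to retain a support of size exceeding $2^k(2^k-2)$ without incurring a matching-penalty that dominates skipping, so that Player Five is guaranteed at or above $\bs{v}'[5]$ in every equilibrium. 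Granting this, $M$ accepts $w$ if and only if $G'$ has an equilibrium with some player below $\bs{v}'$, as required.
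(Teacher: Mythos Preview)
Your central step does not go through: it is not true that in every non-accepting equilibrium Player Two's support shrinks and Player Five's payoff rises above $1/(2^k(2^k-1))$. Suppose Player One fixes an internally consistent but invalid run with a single ``bad'' pair $(i,t)$ at which any $\Cons_2$-respecting play by Two must disagree with One with probability $\delta$ of order $1/2^{2k}$, and randomises uniformly over cells and steps. Now let Player Two play \emph{uniformly} over all $N=2^k(2^k-1)$ legal pairs. Then Player Five is indifferent across those $N$ pairs and is free to weight the bad pair by some $p_b$ and each good pair by $p_g$, chosen so that $(1-\delta)(1-p_b)=1-p_g$; since $(N-1)\delta<1$ here one gets $p_b>0$, so this is a valid mixed strategy for Five, and it renders Two indifferent across her full support. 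The profile is therefore an equilibrium, yet Player Five's utility is exactly $1/N$---identical to the accepting case and strictly below your $\bs{v}'[5]$. Strengthening $\gamma_2$ to also check transitions does not help: the same compensation-by-reweighting argument applies whenever the number $m$ of bad pairs satisfies $(N-m)\delta<1$, and Player One can always arrange a run with only a bounded number of violations (e.g.\ follow a genuine computation for $K-1$ steps and jump to an accepting state at step $K$). So your reduction answers ``yes'' on non-accepting instances as well.

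The paper takes a route that avoids support-size reasoning altogether. It adds a passive Player Seven with $\gamma_7=\gamma_2\wedge\gamma_3$ and $\Phi_7=\emptyset$, so that $u_7$ directly inherits the utility gap between accepting and non-accepting instances already established for Players Two and Three in Theorem~\ref{thm:NEXPhard}; a further Player Eight with $\gamma_8=\neg\gamma_7$ flips $\geq$ into $\leq$, and a small additional matching-pennies gadget shifts Seven's score by a computable $\epsilon$ to make the inequality strict. The essential difference is that Seven's payoff is the probability that the violation-detecting conjunction holds---provably sensitive to acceptance---whereas a matcher's payoff depends only on the opponent's support, which, as shown above, need not move.
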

    \begin{proof}
      We argue that the proof of Theorem~\ref{thm:NEXPhard} can be adapted to show this. Note that the utilities of Players
      One, Four, Five and Six did not play a r\^ole in the proof. Those of Four, Five and Six were omitted entirely,
      whereas Player One has been seen to achieve $\bs{v}[1]$ utility in every equilibrium. What remains are Two and
      Three, and we will argue that those players could be collapsed into a single player.
      
      Introduce a new player into the game constructed in the proof of Theorem~\ref{thm:NEXPhard}, Player Seven, with $\gamma_7=\gamma_2\wedge\gamma_3$
      and $\Phi_7=\emptyset$. We argue that the Turing machine $M$ accepts $w$ in at most $K$ steps if and only if there exists an $\bs{s}$ for which:
      
      $$u_7(\bs{s})\geq 1-\frac{(2^k-2)(2^k-1)+2^k(2^k-1)-1}{2^k(2^k-1)(2^k-2)(2^k-1)}.$$
      This can be seen by replicating the argument in the proof: in the presence of an accepting run, the only way Player Seven
      can lose utility is if Player Five or Six guesses the same cell/step, which happens with probabilities $\frac{1}{2^k(2^k-1)}$ and $\frac{1}{(2^k-2)(2^k-1)}$
      respectively. Adding a term for double counting and simplifying yields the quantity above.
      
      We have thus shown that the following question is NEXP-hard:
      $$\exists\bs{s} \st \exists i \st u_i(\bs{s})\geq\bs{v}[i].$$
      
      For the next step, add Player Eight with $\gamma_8=\neg\gamma_7$ and $\Phi_8=\emptyset$. As $u_8=1-u_7$ the following
      question is NEXP-hard as well, letting $\bs{v}[8]=1-\bs{v}[7]$:
      $$\exists\bs{s} \st \exists i \st u_i(\bs{s})\leq\bs{v}[i].$$
      It remains to show that the inequality can be made strict. 
      
      First, observe that we can increase Player Seven's score, and hence decrease
      Player Eight's, by an arbitrarily small
      $\epsilon$ of a certain form: let $\gamma_7'=\gamma_7\vee\mathit{Pennies}$ where $\mathit{Pennies}$ is a matching pennies game over a new set of variables
      $\Phi_7'$ against some new player. This will give Player Seven $1/2^{|\Phi_7'|}$ additional utility, minus a double counting term. 
      
      All that remains is to show that we can identify a ``sufficiently small" $\epsilon$. By this we mean an $\epsilon$ satisfying
      the following:
      $$\exists\bs{s} \st u_8(\bs{s})-\epsilon<\bs{v}[8]\iff\exists\bs{s} \st u_8(\bs{s})\leq\bs{v}[8].$$
      To see that this is possible, recall that if $M$ does not accept $w$ in $K$ steps, then Player One necessarily specifies
      an incorrect computation history of the machine. As we have seen in the proof of Theorem~\ref{thm:NEXPhard}, such a violation
      decreases the maximum attainable score of Player Two or Three by a fixed amount. It is thus possible to calculate the maximum attainable utility of
      Player Seven in the presence of such a violation, which will give us the bounds within which $\epsilon$ may reside.
      
      This completes the proof.
    \end{proof}
  
    \subsection{Discussion}
    
      The preceding proof raises two related questions. To begin with, one may ask whether
      six players are necessary. The answer is no: the reader may convince themselves
      that one may reduce the number to three in a straightforward fashion by collapsing
      Players Two and Three, and Four, Five and Six onto each other, in a similar fashion to the proof of the corollary. 
      We used six players
      to simplify the exposition of the proof. Whether it is further
      possible to reduce the number to two is a different matter.
      
      Second: whether there is a membership result to go with the hardness. Strictly speaking,
      there is not. As there exist games where every equilibrium requires irrational weights
      on the strategies chosen (\cite{Nash1951}; \cite{Bilo2012})
      we cannot rely on the intuitive approach of guessing a strategy profile and checking whether
      it is in equilibrium.
      
      One way this problem is addressed in the literature is to restrict attention to two player
      games, where a rational equilibrium \emph{is} guaranteed to exist. This brings us back
      to the first question. The second way is to consider the notion of an $\epsilon$-equilibrium:
      a profile of strategies where no player can gain more than $\epsilon$ utility by deviating.
      This problem, $\epsilon$-$\exists$\GN, clearly does belong to NEXP and the reader can convince themselves
      that by inserting a sufficiently small $\epsilon$ into the proof above we can establish that it
      is NEXP-complete.
  
  \section{Conclusion}
  
    We have shown that the problem of determining whether a Boolean game has a Nash equilibrium which
    guarantees each player a certain payoff is NEXP-hard. This is the first complexity result about
    mixed equilibria in the Boolean games framework, and demonstrates that in this instance Boolean games
    are as difficult as the more general class of Boolean circuit games.
    
    The complexity of many other natural problems remains open, most significantly that of {\sc Nash}:
    the task of computing a mixed equilibrium. However, given the difficulty in obtaining this result
    for normal form games \cite{Daskalakis2006} one could posit that it is unlikely that this can
    be achieved with the current tools of complexity theory. It would be interesting to see whether
    there is an exponential time analogue of PPAD that could lead to a solution to this problem.
  
  \section{Acknowledgements}
  
    Egor Ianovski is supported by a scholarship, and Luke Ong is partially supported by a grant, from the Oxford-Man Institute of Quantitative Finance.
   

\bibliographystyle{alpha}
\bibliography{references}
\end{document}